\newcommand*{\R}{\ensuremath{\mathbb{R}}}
\newcommand{\beq}{\begin{equation}}
\newcommand{\eeq}{\end{equation}}
\newcommand*{\cA}{{\mathcal{A}}}
\newcommand*{\de}{{\mathrm{d}}}
\begin{document}
\title{Madelung transform and variational asymptotics in Born-Oppenheimer molecular dynamics\thanks{This work was made possible through the support of Grant 62210 from the John Templeton Foundation.
	The opinions expressed in this publication are those of the authors and do not necessarily reflect the views of the John Templeton Foundation.}}

\author{Paul Bergold\inst{1}\orcidID{0000-0003-3033-0732} \and
Cesare Tronci\inst{1,2}\orcidID{0000-0002-8868-8027}}

\institute{Department of Mathematics, University of Surrey, Guildford, UK
\email{p.bergold@surrey.ac.uk}\\[3mm] \and
\mbox{Department of Physics and Engineering Physics, Tulane University, New Orleans, USA}
\email{c.tronci@surrey.ac.uk}}

\maketitle
\begin{abstract}
	While Born-Oppenheimer molecular dynamics (BOMD) has been widely studied by resorting to powerful methods in mathematical analysis, this paper presents a geometric formulation in terms of Hamilton's variational principle and Euler-Poincar\'{e} reduction by symmetry.
	Upon resorting to the Lagrangian hydrodynamic paths made available by the Madelung transform, we show how BOMD arises by applying asymptotic methods to the variational principles underlying different continuum models and their particle closure schemes.
	In particular, after focusing on the hydrodynamic form of the fully quantum dynamics, we show how the recently proposed bohmion scheme leads to an on-the-fly implementation of BOMD.
	In addition, we extend our analysis to models of mixed quantum-classical dynamics.
	
	\keywords{Born-Oppenheimer approximation \and Variational principle \and Mixed quantum-classical dynamics}
\end{abstract}
%

%%%%%%%%%%%%%%%%%%%%%%%%%%%%%%%%%%%%%%%%%%%%%%%%%%
%%%%%%%%%%%%%%%%%%%%%%%%%%%%%%%%%%%%%%%%%%%%%%%%%%
%%%%%%%%%%%%%%%%%%%%%%%%%%%%%%%%%%%%%%%%%%%%%%%%%%
\section{Introduction}
Although many-body quantum simulations have greatly benefited from high-performance computing, large molecular systems continue to pose formidable challenges.
In particular, molecular dynamics deals with systems comprising $N$ nuclei with masses $M_n>0$ and $L$ electrons with masses $m_e$.
The position coordinates of the nuclei and electrons are $r=(r_1,\dots,r_N)\in\R^{3N}$ and $x=(x_1,\dots,x_L)\in\R^{3L}$, respectively.
The time evolution of the molecular wavefunction $\Psi\in L^2(\R^{3N+3L})$ describing the system is determined by the time-dependent Schr\"odinger equation (TDSE) $i\hbar\partial_t\Psi=\widehat H_{mol}\Psi$.
In the absence of external fields, the molecular Hamiltonian reads $\widehat H_{mol}=\widehat T_e+\widehat T_n+\widehat V_{ee}+\widehat V_{en}+\widehat V_{nn}$.
The first two terms are the electronic and nuclear kinetic energy operators (i.e., $\widehat T_n=-{\hbar^2}\sum_{n=1}^NM_n^{-1}\Delta_{r_n}/2$, and similarly for $\widehat T_e$), and the remaining terms comprise the Coulomb interactions.

\paragraph{Born-Oppenheimer theory.}
Given the computational complexity of this problem, one is forced to look for additional assumptions alleviating the curse of dimensions.
One of the best-known approaches is the (time-dependent) Born-Oppenheimer (BO) theory from 1927, which stands as the cornerstone of modern quantum chemistry.
This theory is based on the observation that the large nuclear-to-electron mass ratio yields a large difference between their kinetic energies, thereby suggesting that nuclei move much slower than the electrons.
In the literature, this argument leads to the \emph{BO approximation} $\Psi(r,x,t)\approx\Omega(r,t)\phi(x;r)$, where ${\int_{\R^{3L}}\left|\phi(x;r)\right|^2\,\mathrm{d}x=1}$ so that $\phi$ is a conditional wavefunction.
Over the decades, the first and second factor have been dubbed \emph{nuclear wavefunction} and \emph{electronic wavefunction}, respectively, even though the nuclear and electronic density matrices are not actually projection operators \cite{Foskett:2019}.
Notice that here the electronic factor is time-independent, which is the essence of the BO approximation.
In particular, $\phi(r):=\phi(\_;r)$ is chosen as an eigenstate of the electronic Hamiltonian $\widehat H_e(r):=\widehat T_e+\widehat V_{ee}+\widehat V_{en}(r)+\widehat V_{nn}(r)$, that is
\begin{align}\label{eq:EEVP}
	\widehat H_e(r)\phi(r)
	=E(r)\phi(r).
\end{align}
The resulting eigenvalues for the different nuclear positions, i.e. the energies $E(r)$, are combined to obtain what is known as a \emph{potential energy surface} (PES) $E\colon\R^{3N}\to\R$.
Then, combining the BO approximation and \eqref{eq:EEVP}, the TDSE can be separated into two smaller, consecutive subproblems.
In the first step, one solves a time-independent eigenvalue problem for each fixed nuclear configuration.
In the second step, one solves the TDSE $i\hbar\partial_t\Omega=\widehat{T}_n\Omega+E\Omega$ to determine the nuclear motion.

\paragraph{Born-Oppenheimer molecular dynamics.}
Despite its well-known limitations, the Born-Oppenheimer approximation (BOA) provides a valuable tool allowing for a significant reduction in computational costs.
Yet, the high-dimensional nature of molecular systems motivates the search for further approximations.
Based on the small value of the electron-to-nuclear mass ratio, one can perform a classical limit on the nuclear evolution, whose classical dynamics then takes place on the PES made available by the electronic eigenvalue problem \eqref{eq:EEVP}.
In this picture, one is left with Newton's equations 
\beq\label{Newton-eq}
	M_n\ddot{R}_n
	=-\nabla_{\!R_n}E(R),\qquad
	n=1,\dots,N,
\eeq
for the nuclear motion.
Known under the name of \emph{Born-Oppenheimer molecular dynamics} (BOMD), the system \eqref{eq:EEVP}-\eqref{Newton-eq} is probably the most widely used model in computational quantum chemistry and represents the object of our investigation.

Over the years, the BO theory and its molecular dynamics counterpart have been established by using several powerful methods from mathematical analysis \cite{Lasser:2020}.
Yet, a geometric characterization in terms of reduction by symmetry and variational principles is still lacking.
Motivated by the success of Geometric Mechanics in a variety of modeling efforts, here we present a series of possible geometric derivations of BOMD coming from quite different perspectives.
Much hinging on the use of diffeomorphic Lagrangian paths in continuum theories, our presentation applies asymptotic analysis to Hamilton's action principle rather than to the TDSE itself.
Differing from conventional techniques, not only does our approach enable a deeper geometric understanding, but also provides new tools for the asymptotic analysis of numerical schemes beyond the BO regime.

\paragraph{Exact factorization and Madelung transform.}
Among the tools used in our discussion, the \emph{exact factorization} (XF) plays a prominent role.
This technique extends the BO approximation to consider a \makebox{time-dependent electronic factor, i.e.,}
\begin{align}\label{eq:XFA}
	\Psi(r,x,t)
	=\Omega(r,t)\phi(x,t;r),
	\quad\text{where}\quad
	\int_{\R^{3L}}\left|\phi(x,t;r)\right|^2\,\mathrm{d}x
	=1.
\end{align}
Going back to von Neumann's celebrated book, the XF has become increasingly popular in quantum chemistry.
For example, BOMD was shown to be recovered from the XF equations for $\Omega$ and $\phi$ in \cite{Eich:2016}.
Notice that \eqref{eq:XFA} provides a convenient representation of exact solutions of the TDSE.
Based on the variational setting of the TDSE, here we will combine the XF with the \emph{Madelung transform} $\Omega=\sqrt{D}e^{iS/\hbar}$, in such a way that the resulting hydrodynamic formulation allows to take advantage of the associated Lagrangian trajectories.

This paper presents four different variational approaches to the formulation of BOMD.
The first two are based on a quantum hydrodynamic description of the molecular evolution.
While one of these approaches involves the continuum PDE setting, the other relies on a particle closure ODE system resulting from a suitable regularization of the continuum description.
The second part of the paper focuses on a very different approach, which treats the original molecular system as an intrinsically mixed quantum-classical system.
Also in this case, we deal with both the continuum PDE setting and its particle ODE closure.

%%%%%%%%%%%%%%%%%%%%%%%%%%%%%%%%%%%%%%%%%%%%%%%%%%
%%%%%%%%%%%%%%%%%%%%%%%%%%%%%%%%%%%%%%%%%%%%%%%%%%
%%%%%%%%%%%%%%%%%%%%%%%%%%%%%%%%%%%%%%%%%%%%%%%%%%
\section{Exact factorization and variational asymptotics}\label{sec:Exact factorization and variational asymptotics}
Upon exploiting the XF of the molecular wavefunction, this section shows how BOMD can be derived by combining variational asymptotics and Euler-Poincar\'{e} reduction by symmetry, within the hydrodynamic formulation of the TDSE.
Our point of departure is the Dirac-Frenkel (DF) action principle $\delta\int_{t_1}^{t_2}\int_{\R^{3N}}\langle\Psi, i\hbar\partial_t\Psi-\widehat H_{mol}\Psi\rangle\,\mathrm{d}r\,{\mathrm{d}t=0}$ underlying the TDSE.
Here, we introduced the real-valued pairing $\langle\Psi_1,\Psi_2\rangle=\operatorname{Re}\langle\Psi_1|\Psi_2\rangle$, with $
	\langle\Psi_1|\Psi_2\rangle=\int_{\R^{3L}}\Psi_1(x,r)^*\Psi_2(x,r)\,\mathrm{d}x$.
A direct verification shows that the molecular TDSE is recovered from the DF action principle by using arbitrary variations.
As we are dealing with several nuclear masses, it is convenient to introduce the block-diagonal matrix $G:=\operatorname{diag}(M_1,\dots,M_N)\otimes\operatorname{Id}_{3\times 3}$, which induces a metric structure $g\colon\R^{3N}\times\R^{3N}\to\R$, operating as $g(u,v):=Gu\cdot v$.
In turn, this metric induces the norm $\|u\|_{g}:=\sqrt{g(u,u)}$ and its inverse $\|\sigma\|_{g^{-1}}:=\sqrt{g^{-1}(\sigma,\sigma)}$ on covectors $\sigma\in\R^{3N}$.
For example, the nuclear kinetic energy reads $\hbar^2\int_{\R^{3N}}g^{-1}(\nabla\Psi^\dagger,\nabla\Psi)\,\mathrm{d}r/2$, where the adjoint $\dagger$ is defined by the electronic inner product $\langle\cdot|\cdot\rangle$ above, so that $g^{-1}(\nabla\Psi^\dagger,\nabla\Psi)=\int_{\Bbb{R}^{3L}}\nabla\Psi^*(x)\cdot G^{-1}\nabla\Psi(x)\,\de x$.

\paragraph{Non-dimensionalization of the action principle.}
In order to obtain a hydrodynamic formulation of the molecular system, we proceed by applying the XF in \eqref{eq:XFA} and using the Madelung transform on the nuclear factor.
Before implementing these steps, however, it is convenient to perform a non-dimensionalization of the DF Lagrangian in order to reveal the role of the electron-to-nuclear mass ratio.
For this purpose, we introduce the unit system in \cite{Eich:2016} by replacing the electronic mass $m_e$ by the average nuclear mass $M_0=\sum_nM_n/N$.
Each physical unit is then expressed in terms of the mass $M_0$, the Bohr radius $\lambda_0$, the Hartree energy $E_h$ and the elementary charge $e$.
In particular, the units of time and action are given by $t_0=\sqrt{M_0/E_h}\lambda_0$ and $a_0=\sqrt{M_0E_h}\lambda_0$.
In addition, the reduced version of Planck constant $\hbar$ reads $\sqrt{\mu}a_0$, where we introduced $\mu:=m_e/M_0$.
Notice that we refrain from introducing an additional symbol to distinguish between original and dimensionless variables.
From now on all quantities are suitably non-dimensionalized.
At this point, we are left with the DF Lagrangian $L_{DF}=\int_{\R^{3N}}\langle\Psi,i\sqrt{\mu}\partial_t\Psi-\widehat H_{mol}\Psi\rangle\,\mathrm{d}r\,\mathrm{d}t$, where we have conveniently divided by the Hartree energy $E_h$.
The remainder of this paper will deal only with this non-dimensionalized version of the DF action principle and its variants.

\paragraph{Madelung transform and Euler-Poincar\'{e} variations.}
As a next step, we combine the XF \eqref{eq:XFA} of the dimensionless molecular wavefunction with the Madelung transform of the nuclear factor, that is $\Omega=\sqrt{D}e^{iS/\sqrt{\mu}}$.
These steps take the DF Lagrangian $\int_{\R^{3N}}\langle\Psi,i\hbar\partial_t\Psi-\widehat H_{mol}\Psi\rangle\,\mathrm{d}r$ into the form
\begin{multline}\label{eq:LXF}
	L_{XF}(S,D,\phi,\partial_t\phi)
	=\int\!\Big(\big\langle\phi,i\sqrt{\mu}\partial_t\phi-\widehat H_e\phi\big\rangle-\partial_t S\\
	-\frac{\mu}{2D}\|\nabla\sqrt{D}\|_{g^{-1}}^2-\frac{1}{2}\|\nabla S+\mathcal{A}_B\|_{g^{-1}}^2-\epsilon(\phi)\Big)D\mathrm{d}r,
\end{multline}
where $\mathcal{A}_B:=\left\langle\phi\mid-i\sqrt{\mu}\nabla\phi\right\rangle$ denotes the Berry connection and
\begin{align}\label{elecpot}
	\epsilon(\phi)
	:=\frac{\mu}{2}g^{-1}(\nabla\phi^\dagger,\nabla\phi)-\frac{1}{2}\|\mathcal{A}_B\|_{g^{-1}}^2
\end{align}
is usually referred to as the \emph{electronic potential}.
We notice that the latter can also be written as $\epsilon(\phi)={\mu}\operatorname{Tr}(G^{-1}\operatorname{Re}Q)/2$, where $Q_{jk}:=\langle\partial_j\phi|({1}-\phi\phi^\dagger)\partial_k\phi\rangle$ is the \emph{quantum geometric tensor}.
Also, in the reminder of this paper all integrals are on ${\R^{3N}}$, unless otherwise specified.

We observe that, so far, there seems to be nothing in the variational principle ensuring that the electronic factor is normalized at all times.
While this condition could be easily enforced by resorting to a Lagrange multiplier, this appears unnecessary at the current level.
Indeed, it follows from a direct verification that the normalization of $\phi$ is preserved in time by the equations resulting from $\delta\int_{t_1}^{t_2}L_{XF}\,{\rm d}t=0$.
Nevertheless, for later purpose, here we need to encode this normalization in the variational principle.
This is due to the fact that the time-conservation of the normalization condition may be lost when attempting to perform suitable approximations on the action, such as the asymptotic expansion that is presented below.
Instead of using Lagrange multipliers, here we follow the Euler-Poincar\'{e} reduction method in geometric mechanics \cite{Holm:1998}.
In particular, we restrict the electronic factor to evolve on orbits of the infinite-dimensional group ${\cal F}(\Bbb{R}^{3N},{\cal U}(L^2(\Bbb{R}^{3L})))$ of mappings $U\colon\Bbb{R}^{3N}\to{\cal U}(L^2(\Bbb{R}^{3L}))$ into the unitary operators on the electronic Hilbert space $L^2(\Bbb{R}^{3L})$.
This amounts to setting $\phi(t;r)=U_t(r)\phi_0(r)$ for some curve $U_t\in{\cal F}(\Bbb{R}^{3N},{\cal U}(L^2(\Bbb{R}^{3L})))$, so that the normalization of $\phi$ remains a preserved initial condition.
As customary in Euler-Poincar\'{e} theory, taking the relevant derivatives of $\phi$ yields $\partial_t\phi=\xi\phi$ and $\delta\phi=\gamma\phi$,	 where we have dropped the explicit time-dependence for convenience and both $\xi=\dot{U}U^{-1}$ and $\gamma=\delta{U}U^{-1}$ are skew-Hermitian operators on $L^2(\Bbb{R}^{3L})$.
With this in mind, the Lagrangian \eqref{eq:LXF} becomes
\begin{multline}\label{eq:LXF2}
	{\ell_{XF}}(S,D,\xi,\phi)
	=\int\!\Big(\big\langle\phi,i\sqrt{\mu}\xi\phi-\widehat H_e\phi\big\rangle-\partial_t S\\
	-\frac{\mu}{2D}\|\nabla\sqrt{D}\|_{g^{-1}}^2-\frac{1}{2}\|\nabla S+\mathcal{A}_B\|_{g^{-1}}^2-\epsilon(\phi)\Big)D\mathrm{d}r,
\end{multline}
where both $\delta S$ and $\delta D$ are arbitrary.
Having gone through several preparatory steps, we are ready to show how the BOMD equations can be derived from the action principle associated to \eqref{eq:LXF2}.

\paragraph{Variational asymptotics for BOMD.}
We will now prove that BOMD is recovered from the lowest-order asymptotic expansion of the Lagrangian \eqref{eq:LXF2}.
This expansion is obtained in the limit of a small electron-to-nuclear mass ratio, i.e., $\mu\to0$.
\begin{proposition}\label{fact:BOMD1}
	Consider the variational problem $\delta\int_{t_1}^{t_2}{\ell_{XF}}\,{\rm d}t=0$ associated to \eqref{eq:LXF2}, with $\delta\phi=\gamma\phi$ and arbitrary $\gamma$, $\delta D$ and $\delta S$.
	In the limit $\mu\to0$, this action principle yields the following continuum PDE system:
	\begin{align*}
		i)\,\frac{\partial D}{\partial t}+\operatorname{div}(DG^{-1}\nabla S)
		=0,\quad\,\,
		ii)\,
		\frac{\partial S}{\partial t}+\frac12\|\nabla S\|_{g^{-1}}^2
		=-E,\quad\,\,
		iii)\,\widehat H_e\phi
		=E\phi.
	\end{align*}
\end{proposition}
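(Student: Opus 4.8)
The plan is to expand the reduced Lagrangian \eqref{eq:LXF2} in powers of the small parameter $\mu$ \emph{before} taking any variation, keep the leading ($\mu^0$) term, and then read off the three equations as its Euler--Poincar\'e/Euler--Lagrange conditions. First I would sort the $\mu$-dependence in \eqref{eq:LXF2}. Since $\mathcal{A}_B=\langle\phi\mid-i\sqrt{\mu}\nabla\phi\rangle=O(\sqrt{\mu})$, one has $\|\nabla S+\mathcal{A}_B\|_{g^{-1}}^2=\|\nabla S\|_{g^{-1}}^2+O(\sqrt{\mu})$; both summands of $\epsilon(\phi)$ in \eqref{elecpot} are $O(\mu)$; the quantum-potential term $\mu\|\nabla\sqrt{D}\|_{g^{-1}}^2/(2D)$ is $O(\mu)$; and $\langle\phi,i\sqrt{\mu}\xi\phi\rangle=\sqrt{\mu}\langle\phi\mid i\xi\phi\rangle=O(\sqrt{\mu})$. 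Treating $S,D,\xi,\phi$ as order-one fields --- the standard working hypothesis of variational asymptotics --- the leading-order Lagrangian is therefore
\[
\ell_0(S,D,\phi)=\int\Big(-\langle\phi,\widehat H_e\phi\rangle-\partial_t S-\tfrac12\|\nabla S\|_{g^{-1}}^2\Big)D\,\mathrm{d}r .
\]
Note that $\xi$ has dropped out entirely, so its variation is vacuous at this order, consistently with the fact that the limiting electronic equation \emph{iii)} is time-independent; note also that, since $\phi=U\phi_0$ with $U$ unitary-operator valued, the normalization $\int|\phi|^2\,\mathrm{d}x=1$ is automatically preserved.

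Next I would take the three remaining independent variations of $\int_{t_1}^{t_2}\ell_0\,\mathrm{d}t$. Varying $S$ and integrating by parts in $t$ and in $r$ (endpoint and decay terms vanishing) turns the contribution of $-\partial_t S-\tfrac12\|\nabla S\|_{g^{-1}}^2$ into $\int(\partial_t D+\operatorname{div}(DG^{-1}\nabla S))\,\delta S$, which gives \emph{i)}. Varying $D$, which is free and enters $\ell_0$ only through the overall factor, gives the pointwise identity $\partial_t S+\tfrac12\|\nabla S\|_{g^{-1}}^2+\langle\phi,\widehat H_e\phi\rangle=0$. For the variation $\delta\phi=\gamma\phi$ with $\gamma$ skew-Hermitian, only $-\langle\phi,\widehat H_e\phi\rangle$ contributes; the Euler--Poincar\'e computation $\delta(U^\dagger\widehat H_e U)=U^\dagger[\widehat H_e,\gamma]U$ yields $\delta\langle\phi,\widehat H_e\phi\rangle=\langle\phi\mid[\widehat H_e,\gamma]\phi\rangle=2\operatorname{Re}\langle\widehat H_e\phi\mid\gamma\phi\rangle$, so stationarity requires $\int\operatorname{Re}\langle\widehat H_e\phi\mid\gamma\phi\rangle\,D\,\mathrm{d}r=0$ for all skew-Hermitian $\gamma$.

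The delicate step is to extract \emph{iii)} from this last condition. Since $D>0$ away from any nodal set, the integral condition localizes to $\operatorname{Re}\langle\widehat H_e\phi\mid\gamma\phi\rangle=0$ pointwise in $r$, for every skew-Hermitian operator $\gamma$ on $L^2(\mathbb{R}^{3L})$. The key fact is that $\{\gamma\phi:\gamma^\dagger=-\gamma\}$ coincides with the real hyperplane $T_\phi:=\{w:\operatorname{Re}\langle w\mid\phi\rangle=0\}$ tangent to the unit sphere at $\phi$: the inclusion $\subseteq$ is immediate from skew-Hermiticity, and conversely any $w\in T_\phi$ is realized as $\gamma\phi$ by a suitable skew-Hermitian $\gamma$ assembled from rank-one operators built out of $\phi$ and $w$. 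Hence $\widehat H_e\phi$ is real-orthogonal to $T_\phi$, i.e.\ $\widehat H_e\phi\in\mathbb{R}\phi$, which is exactly the eigenvalue equation $\widehat H_e\phi=E\phi$ with $E:=\langle\phi,\widehat H_e\phi\rangle$ ($E$ real because $\widehat H_e$ is Hermitian); this is \emph{iii)}. Finally, substituting $\langle\phi,\widehat H_e\phi\rangle=E$ into the identity obtained from the $\delta D$ variation gives $\partial_t S+\tfrac12\|\nabla S\|_{g^{-1}}^2=-E$, which is \emph{ii)}. I expect the main obstacles to be (a) justifying that replacing the action by its $\mu\to0$ truncation is legitimate --- a formal but standard feature of this method --- and (b) the argument just sketched, which is what upgrades ``orthogonality to every tangent direction'' into the full eigenvalue equation rather than merely its projection onto the orthogonal complement of $\phi$.
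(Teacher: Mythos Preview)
Your proof is correct and follows essentially the same route as the paper: pass to the $\mu\to0$ Lagrangian first, then take the $\delta S$, $\delta D$, and $\delta\phi=\gamma\phi$ variations in turn, finally substituting $\langle\phi,\widehat H_e\phi\rangle=E$ into the Hamilton--Jacobi relation. The only cosmetic difference is in how you extract \emph{iii)}: the paper rewrites the stationarity condition from $\delta\phi=\gamma\phi$ as the commutator identity $[\phi\phi^\dagger,\widehat H_e]=0$ and applies both sides to $\phi$, which is a one-line equivalent of your tangent-space argument that $\{\gamma\phi:\gamma^\dagger=-\gamma\}=T_\phi$.
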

\begin{proof}
	To derive the above equations, we first consider the scaled Lagrangian arising in the limit $\mu\to0$, which, up to an overall sign, is given by
	\begin{align*}
		\tilde\ell_{XF}(S,D,\phi)
		=\int D\Big(\partial_t S+\langle\phi,\widehat H_e\phi\rangle+\frac{1}{2}\|\nabla S\|_{g^{-1}}^2\Big)\,\mathrm{d}r.
	\end{align*}
	The arbitrary variations in $\delta S$ yield the continuity equation i), while variations in $\delta D$ yield the nuclear Hamilton-Jacobi equation $\partial_ tS+\|\nabla S\|^2_{G^{-1}}/2=-\langle\phi,\widehat H_e\phi\rangle$.
	Although the Lagrangian $\tilde\ell_{XF}$ does not depend on $\xi$ because the term $\sqrt{\mu}D\langle\phi,i\xi\phi\rangle$ has vanished, we notice that the variations of the Euler-Poincar\'{e} variational principle $\delta\int_{t_1}^{t_2}\tilde\ell_{XF}\,{\rm d}t=0$ still contain the condition $\delta\phi=\gamma\phi$, which yields $[\phi\phi^\dagger,\widehat H_e]=0$.
	Applying both sides on $\phi$ results in the eigenvalue problem iii), thereby taking the Hamilton-Jacobi equation into the form ii).
\end{proof}

We remark that this conclusion can be equivalently reached by using arbitrary variations $\delta\phi$ and resorting to Lagrange multipliers.
The transport equation i) may be problematic due to the presence of second-order gradients of Hamilton's principal function $S$, which is known to develop caustic singularities in realistic scenarios.
At this point, one may introduce particle trajectories by proceeding formally and substituting $D(r,t)=\delta(r-R(t))$ into the transport equation i), thereby obtaining $G\dot R=\nabla S(R)$.
Then, taking the gradient of equation ii) and evaluating ${\rm d}\nabla S(R)/{\rm d} t=-\nabla E(R)$ leads to the Newtonian equations \eqref{Newton-eq}.
Alternatively, one can apply the standard method of characteristics, which provides solutions of the Hamilton-Jacobi equation the initial condition $S(r,t_0)=S_0(r)$.
In the present case, the characteristic trajectories of the Hamilton-Jacobi equation are given by the extremals of $\delta\int_{t_1}^{t_2}(\|\dot q\|^2_g/2-E(q))\,\mathrm{d}t$, which in particular satisfy the Newtonian equations \eqref{Newton-eq}.
Moreover, it is easy to prove that the momentum of the characteristics is given by the gradient $\nabla S$ of the solution $S$.

%%%%%%%%%%%%%%%%%%%%%%%%%%%%%%%%%%%%%%%%%%%%%%%%%%
%%%%%%%%%%%%%%%%%%%%%%%%%%%%%%%%%%%%%%%%%%%%%%%%%%
%%%%%%%%%%%%%%%%%%%%%%%%%%%%%%%%%%%%%%%%%%%%%%%%%%
\section{Quantum hydrodynamics and the bohmion method\label{sec:QHD+BM}}
In this section we show how the variational asymptotic method presented above may be applied in the context of quantum hydrodynamics and one of its particle closure schemes, recently proposed under the name of \emph{bohmion method} \cite{Foskett:2019}.
The latter hinges on a sampling process involving the Lagrangian paths underlying the relevant Madelung hydrodynamic equations.
These Lagrangian paths identify the well-known \emph{Bohmian trajectories} from quantum theory, thereby explaining the term `bohmion'.
The discussion will proceed in two stages.
First, we will present the hydrodynamic formulation of the XF equations as they arise from the action principe associated to \eqref{eq:LXF}.
In the second step, we will introduce the sampling process leading to the bohmion closure scheme and illustrate how the latter recovers BOMD by variational asymptotics.

\paragraph{Exact factorization and hydrodynamics.}
It is convenient to transform the Lagrangian \eqref{eq:LXF} by observing that the equation $\partial_t D+\operatorname{div}(DG^{-1}(\nabla S+\cA_B))=0$, obtained from the variations $\delta S$, identifies a Lie-transport evolution of the type $\de D/\de t=0$ along the vector field $u=G^{-1}(\nabla S+\cA_B)$.
Consequently, if the Lagrangian path $\eta_t\in\operatorname{Diff}(\Bbb{R}^{3N})$ is defined by the instantaneous integral curves of $u$ via the relation $\dot{\eta}_t(r_0)=u\circ\eta_t(r_0)$, then one can write $D(t)=\eta_*D_0$, where $\eta_*$ denotes the push-forward.
If the Lagrangian path is considered as a dynamical variable, then we can make the replacement $-\int_{\R^{3N}} D\partial_t S\de r=\int_{\R^{3N}} D\nabla S\cdot u\,\de r=:\int_{\R^{3N}} {M}\cdot u\,\de r$, in \eqref{eq:LXF2}.
Here, we have dropped an irrelevant total time derivative and we have introduced the hydrodynamic momentum ${M}=D\nabla S$.
Then, the action principle associated to \eqref{eq:LXF2} becomes
\begin{multline}\label{eq:LXF2-bis}
	\delta\int_{t_1}^{t_2}\!\int\!\Big({M}\cdot u+D\big\langle\phi,i\sqrt{\mu}\partial_t\phi-\widehat H_e\phi\big\rangle\\
	-\frac{\mu}{2}\|\nabla\sqrt{D}\|_{g^{-1}}^2-\frac{1}{2D}\|{M}+D\mathcal{A}_B\|_{g^{-1}}^2-D\epsilon(\phi)\Big)\mathrm{d}r=0,
\end{multline}
where $\delta{M}$ and $\delta\phi$ are both arbitrary.
Also, the variations 
\beq\label{vars}
	\delta D
	=-\operatorname{div}(Dw),\qquad\qquad
	\delta u
	=\partial_t w+u\cdot\nabla w-w\cdot\nabla u,
\eeq
are obtained directly from the relations $D(t)=\eta_*D_0$ and $u=\dot{\eta}_t\circ\eta_t^{-1}$ above, and by defining the arbitrary vector field $w=\delta\eta_t\circ\eta_t^{-1}$.

While the variational principle \eqref{eq:LXF2-bis} succeeds in unfolding the role of Lagrangian paths through the variations \eqref{vars}, we will perform two more steps that allow eliminating the explicit appearance of both the momentum variable ${M}$ and the Berry connection $\cA_B$ in the variational problem.
On the one hand, eliminating the momentum allows formulating the problem only in terms of the hydrodynamic velocity $u$.
On the other hand, eliminating $\cA_B$ is particularly convenient in obtaining a \emph{gauge-independent} formulation that removes the phase arbitrariness introduced by the XF decomposition \eqref{eq:XFA}.

\paragraph{Quantum motion in the hydrodynamic frame.}
We will proceed by expressing the quantum evolution of $\phi$ in the frame moving with the path $\eta_t$.
This process is performed as follows.
First, we observe that the quantum evolution reads \cite{Foskett:2019}
\begin{align}\label{phi-evol}
	i\hbar\left(\partial_t+u\cdot\nabla\right)\phi
	=\frac{1}{2D}\frac{\delta F}{\delta\phi}+\widehat H_e\phi,
	\quad\text{where}\quad
	F(D,\phi)
	:=\int D\epsilon(\phi)\,\mathrm{d}r.
\end{align}
As a further step, we define $\rho=\phi\phi^\dagger$ and write $\epsilon(\phi)=\mu\|\nabla\rho\|_{g^{-1}}^2/4=:\tilde\epsilon(\rho)$, where $\epsilon(\phi)$ is given in \eqref{elecpot} and $\|\nabla\rho\|_{g^{-1}}^2=\langle\nabla\rho,G^{-1}\nabla\rho\rangle$.
Also, we have denoted by $\langle A,B\rangle=\operatorname{Re}\langle A|B\rangle=\operatorname{Re}\operatorname{tr}(A^\dagger B)$ the pairing between trace-class operators on $L^2(\Bbb{R}^{3L})$.
The chain-rule relation ${\delta F}/{\delta\phi}=2({\delta\tilde F}/{\delta\rho})\phi$ takes \eqref{phi-evol} into the form
\beq\label{phi-evol2}
	i\hbar\left(\partial_t+u\cdot\nabla\right)\phi
	=\bigg(\frac{1}{D}\frac{\delta\tilde{F}}{\delta\rho}+\widehat H_e\bigg)\phi,
	\quad\text{where}\quad
	\tilde{F}(D,\rho)
	:=\int D\tilde\epsilon(\rho)\,\mathrm{d}r.
\eeq
Since the operator $D^{-1}{\delta\tilde{F}}/{\delta\rho}+\widehat H_e$ is Hermitian, the equation above unfolds the nature of the quantum evolution: a quantum state evolves unitarily in the frame moving with the hydrodynamic velocity $u$.
This conclusion motivates us to write the evolution of $\phi$ as $\phi(t)=(\tilde{U}_t\phi_0)\circ\eta_t^{-1}$, with $\tilde{U}_t\in{\cal F}(\Bbb{R}^{3N},{\cal U}(L^2(\Bbb{R}^{3L})))$.
Notice that the above relation identifies an action of the semidirect-product group $\operatorname{Diff}(\Bbb{R}^{3N})\,\circledS\,{\cal F}(\Bbb{R}^{3N},{\cal U}(L^2(\Bbb{R}^{3L})))$ that is constructed by the natural pullback action of $\operatorname{Diff}(\Bbb{R}^{3N})$ on ${\cal F}(\Bbb{R}^{3N},{\cal U}(L^2(\Bbb{R}^{3L})))$.
Consequently, the quantum evolution occurs on orbits of this semidirect-product group. 
Also, the evolution law $\phi=(\tilde{U}_t\phi_0)\circ\eta_t^{-1}$ implies $\partial_t\phi=\tilde\xi\phi-u\cdot\nabla\phi$, with $\tilde\xi=\partial_t{\tilde{U}}\tilde{U}^{-1}\circ\eta_t^{-1}$, and we observe that this is precisely the same form as in \eqref{phi-evol2}.

To proceed further, we replace $\langle\phi,i\partial_t\phi\rangle=\langle\phi,i\tilde\xi\phi\rangle+u\cdot\cA_B$ in \eqref{eq:LXF2-bis}.
Then, upon defining $m:={M}+D\cA_B$, and by inverting the Legendre transform $m=DGu$, the action principle \eqref{eq:LXF2-bis} becomes $\delta\int_{t_1}^{t_2}\ell_{EP}(u,D,\tilde\xi,\rho)\,\de t=0$ with the Euler-Poincar\'{e} Lagrangian
\beq\label{EPLagr1}
	\ell_{EP}
	=\int\!\Big(\frac{1}2D\|u\|_{g}^2+D\big\langle\rho,i\sqrt{\mu}\tilde\xi-\widehat H_e\big\rangle-\frac{\mu}{8D}{\|\nabla{D}\|_{g^{-1}}^2}-\frac\mu4D\|\nabla\rho\|_{g^{-1}}^2\Big)\,\mathrm{d}r.
\eeq
Here, the variations $\delta\rho=-\operatorname{div}(\rho w)+[\tilde\gamma,\rho]$ and $\delta\tilde\xi=\partial_t\tilde\gamma+[\tilde\gamma,\tilde\xi]-w\cdot\nabla\tilde\xi+u\cdot\nabla\tilde\gamma$ follow from the definitions $\rho=\phi\phi^\dagger$ and ${\tilde\xi=\partial_t{\tilde{U}}\tilde{U}^{-1}\circ\eta_t^{-1}}$, respectively.
Eventually, together with \eqref{vars}, the action principle associated to $\ell_{EP}$ leads to
\begin{align}\label{final-D-eqn}
	\begin{split}
		&(\partial_t+u\cdot\nabla)u^\flat
		=-\nabla V_Q-\langle\rho,\nabla\widehat{H}_e\rangle-\frac{\mu}{2D}\operatorname{div}\operatorname{tr}((D\nabla\rho)^{\sharp}\otimes\nabla\rho),\\
		&i\hbar(\partial_t+u\cdot\nabla)\rho
		=\Big[\widehat{H}_e-\frac{\mu}{2D}{\rm div}(D\nabla\rho)^\sharp,\rho\Big],\qquad\,
		\partial_t D+{\rm div}(Du)
		=0.
	\end{split}
\end{align}
Here, we used the musical isomorphisms induced by the metric $g$ while $\operatorname{tr}$ is the quantum trace.
Also, ${V_Q=-\mu\operatorname{div}(\nabla\sqrt{D})^\sharp/(2\sqrt{D})}$ is the \emph{quantum potential}.

The system \eqref{final-D-eqn} represents the hydrodynamic form of the equations of motion resulting from the variational principle for \eqref{eq:LXF}.
The second-order gradients in the first two equations make the level of complexity of this system rather intimidating.
A finite-dimensional closure scheme becomes necessary in order for these equations to be used in the context of molecular dynamics simulations.

\paragraph{Regularization and the bohmion scheme.}
An immediate consequence of the second-order gradients in \eqref{final-D-eqn} is the lack of delta-type solutions of the form $D(r,t)=\delta(r-R(t))$, which are instead allowed after taking the classical limit $\mu\to0$ as shown in Proposition~\ref{fact:BOMD1}.
To overcome this limitation, the bohmion method exploits a variational regularization to restore point-particle trajectories.
This regularization is applied after rewriting the Euler-Poincar\'{e} Lagrangian \eqref{EPLagr1} in terms of the weighted variable $\tilde\rho=D\rho$.
Then, one smoothens by replacing ${\|\nabla{D}\|_{g^{-1}}^2}/{D}\to{\|\nabla{\bar{D}}\|_{g^{-1}}^2}/{\bar{D}}$ and ${\|\nabla\tilde\rho\|_{g^{-1}}^2}/{D}\to\|\nabla\bar{\rho}\|_{g^{-1}}^2/{\bar{D}}$, where $\bar{D}=K_\alpha*D$ and $\bar{\rho}=K_\alpha*\tilde\rho$, for some normalized convolution kernel $K_\alpha$ depending on a modeling length-scale $\alpha>0$.
In the limit $\alpha\to0$, we ask for $K_\alpha$ to tend to a delta function thereby recovering the original Lagrangian.

Due to this smoothing process, the resulting regularized equations allow for singular delta-like expressions of $D$ and $\tilde\rho$, thereby returning point trajectories called \emph{bohmions}.
The trajectory equations may be found by replacing the ansatz $D(r,t)=\sum_{a=1}^P w_a\delta(r-{q}_a(t))$ and $\tilde\rho(r,t)=\sum_{a=1}^P w_a\varrho_{a}(t)\delta(r-{q}_a(t))$ in the regularized Lagrangian, where the positive weights $w_a$ satisfy $\sum_{a}w_a=1$.
Also, we have $\varrho_a=\varphi_a\varphi_a^\dagger$ and $\varrho_a(t)=U_a(t)\varrho_{0a}U_a(t)^\dagger$, so that $\partial_t\varrho_a=[\xi_a,\varrho_a]$.
Upon denoting $\widehat H_a=\widehat H_e({{q}}_a)$, this process leads to the Lagrangian
\begin{multline*}
	L(q,\dot{q},\rho)
	=\sum_{a=1}^Pw_a\bigg(\frac{1}{2}\|\dot{{q}}_a\|_{g}^2+\langle\varrho_a, i\sqrt{\mu}\xi_a-\widehat H_a\rangle\\
	+\frac{\mu}{8}\sum_{b=1}^P w_b(1-2\langle\varrho_a,\varrho_b\rangle)\int\frac{g^{-1\!}\big(\nabla K_\alpha(r-{{q}}_a),\nabla K_\alpha(r-{{q}}_b)\big)}{\sum_c w_cK_\alpha(r-{{q}}_c)}\,\mathrm{d}r\bigg),
\end{multline*}
where $\delta\xi_a=\partial_t\gamma_a-[\xi_a,\gamma_a],\,\delta\varrho_a=[\gamma_a,\varrho_a]$, while $\delta q_a$ and $\gamma_a$ are arbitrary.
We emphasize that these bohmions do not correspond to physical particles, but rather to \emph{computational particles} that sample nuclear hydrodynamic paths.

It is now easy to see how the bohmion scheme reduces to BOMD in the classical limit.
Indeed, letting $\mu\to0$ and taking variations of the resulting Lagrangian ${\tilde L}=\sum_{a}w_a\big(\|\dot{{q}}_a\|_{g}^2/2-\langle\varrho_a,\widehat H_a\rangle\big)$ yields
\begin{align}\label{eq:bohmions}
	G\ddot{{q}}_a
	=-\nabla\langle\varphi_a,\widehat H_e({{q}}_a)\varphi_a\rangle,\qquad\quad
	\widehat H_e({{q}}_a)\varphi_a
	=E({{q}}_a)\varphi_a.
\end{align}
Here, the second equation follows from the Euler-Poincar\'{e} equation $[\widehat H_a,\varrho_a]=0$, which arises from the variations $\delta\xi_a$ in Hamilton's principle.
We observe that BOMD is recovered in the case of only one bohmion, that is $P=1$.
Importantly, we have obtained an on-the-fly implementation scheme for BOMD in which the electronic structure problem associated to \eqref{eq:EEVP} (usually very challenging) is replaced by a finite-dimensional eigenvalue problem to be solved at each time-step along trajectories.
In the general case ${P>1}$, the bohmion method recovers an on-the-fly model to BOMD in which the single particle distribution ${D(r,t)=\delta(r-R(t))}$ is replaced by the statistical sampling $D(r,t)=\sum_{a=1}^P w_a\delta(r-{q}_a(t))$ of the entire density associated to classical nuclear motion.

%%%%%%%%%%%%%%%%%%%%%%%%%%%%%%%%%%%%%%%%%%%%%%%%%%
%%%%%%%%%%%%%%%%%%%%%%%%%%%%%%%%%%%%%%%%%%%%%%%%%%
%%%%%%%%%%%%%%%%%%%%%%%%%%%%%%%%%%%%%%%%%%%%%%%%%%
\section{Mixed quantum-classical dynamics}
This section extends the variational asymptotic methods discussed earlier to the case of \emph{mixed quantum-classical} (MQC) models.
This type of models are motivated by the need to go beyond BOMD when the BO approximation fails to hold.
Inspired by the result from BO theory that nuclear dynamics can be approximated as classical one seeks a model in which classical nuclear motion is coupled to fully quantum electronic evolution.
Here, we focus on the model presented in \cite{Gay-Balmaz:2022aa}, to which we refer for a thorough discussion.
The main ingredient of this model resides in {\it Koopman wavefunctions}, that is functions $\chi(r,p)\in L^2(\Bbb{R}^{6N})$ such that $\rho_c=|\chi|^2$ obeys the classical Liouville equation $\partial_t\rho_c=\{H,\rho_c\}$.
Then, one can take the tensor-product space to describe MQC dynamics in terms of hybrid wavefunctions $\Upsilon\in L^2(\Bbb{R}^{6N})\otimes L^2(\Bbb{R}^{3L})$ so that $\rho_c=\Upsilon^\dagger\Upsilon$ and $\hat\uprho_q=\int\Upsilon\Upsilon^\dagger\,\de r\de p$ are the classical Liouville density and the quantum density matrix, respectively.

Instead of presenting the model equations, here we follow the procedure outlined before and consider their underlying variational principle \cite{Gay-Balmaz:2022aa}.
For the model under consideration, the latter involves the following Lagrangian $L_{\operatorname{QC}}(\Upsilon,\partial_t\Upsilon,\mathcal{X})$:
\begin{align*}
	L_{\operatorname{QC}}
	=\int\left\langle\Upsilon,i\hbar\partial_t\Upsilon+(\mathcal{A}-\mathcal{A}_B)\cdot\mathcal{X}\Upsilon-({\widehat{H}}-\mathcal{A}_B\cdot{{X}}_{\widehat{H}})\Upsilon+i\hbar{{X}}_{\widehat{H}}\cdot\nabla\Upsilon\right\rangle\,\de r\de p,
\end{align*}
where we have introduced $\mathcal{A}_B=\left\langle\Upsilon,-i\hbar\nabla\Upsilon\right\rangle/(\Upsilon ^\dagger\Upsilon)$ and the MQC Hamiltonian vector field ${X}_{\widehat H}:=(\partial_p\widehat H,-\partial_r\widehat H)$ of the operator-valued Hamiltonian function $\widehat H(r,p)$.
Also, $\mathcal{A}=(p,0)$ is the coordinate representation of the canonical Liouville one-form $\mathcal{A}=p\cdot\de r$.
In the above Lagrangian, the variation $\delta\Upsilon$ is arbitrary, while the variation of the vector field $\mathcal{X}$ arises from its definition in terms of the Lagrangian phase-space paths $\upeta_t\in\operatorname{Diff}(\Bbb{R}^{6N})$, that is $\dot\upeta_t(r_0,p_0)=\mathcal{X}\circ\dot\upeta_t(r_0,p_0)$.
Then, we have the Euler-Poincar\'{e} variation
\begin{align}\label{eq:varX}
	\delta\mathcal{X}
	=\partial_t\mathcal{Y}+\mathcal{X}\cdot\nabla\mathcal{Y}-\mathcal{Y}\cdot\nabla\mathcal{X},
\end{align}
where $\mathcal{Y}$ is an arbitrary displacement vector field.

Performing the non-dimensionalization of the MQC Lagrangian, applying the XF $\Upsilon(r,p,t)=\Omega(r,p,t)\phi(x,t;r,p)$ as in \eqref{eq:XFA}, and using the Madelung transform $\Omega=\sqrt{\rho_c}e^{iS/\sqrt{\mu}}$, we rewrite $L_{\operatorname{QC}}$ into the form
\begin{align}\label{eq:hybridnon}
	\ell_{\operatorname{QC}}\!
	=\int\!\bigg(\rho_c\Big(\dot S+(\nabla S-\mathcal{A})\cdot\mathcal{X}+\langle\phi,\widehat H_e\phi\rangle+\frac{1}{2}\|p\|_{g^{-1}}^2\Big)+\mathcal{O}(\sqrt{\mu})\bigg)\de r\de p,
\end{align}
where we decomposed the MQC Hamiltonian as $\widehat H=\|p\|_{g^{-1}}^2/2+\widehat H_e$ into the classical kinetic energy operator for the nuclei and the quantum electronic part.
Similarly to the discussion preceding Proposition~\ref{fact:BOMD1}, here we have retained the normalization condition for $\phi$ by letting $\phi=U\phi_0$.
The latter implies $\delta\phi=\gamma\phi$, where $\gamma$ is an arbitrary skew-Hermitian operator.
A direct verification leads to
\begin{proposition}
	Consider the variational problem $\delta\int_{t_1}^{t_2}\ell_{\operatorname{QC}}\,{\rm d}t=0$ associated to \eqref{eq:hybridnon}, with $\delta\phi=\gamma\phi$, $\delta\mathcal{X}$ as in \eqref{eq:varX}, and arbitrary $\delta D$ and $\delta S$.
	In the limit $\mu\to0$, this action principle yields the following continuum PDE system:
	\begin{align*}
		i)\,\partial_t\rho_c+\operatorname{div}(\rho_c{X}_{ H_{cl}})
		=0,\quad
		ii)\,\partial_t S+\{S, H_{cl}\}
		=p\partial_p H_{cl}-H_{cl},\quad
		iii)\,\widehat H_e\phi
		=E\phi,
	\end{align*}
	where $ H_{cl}({z})=\frac{1}{2}\|p\|_{g^{-1}}^2+E(q)$.
\end{proposition}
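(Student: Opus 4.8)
The plan is to mirror the structure of the proof of Proposition~\ref{fact:BOMD1}, with one extra variation. In the limit $\mu\to0$ the $\mathcal{O}(\sqrt{\mu})$ remainder in \eqref{eq:hybridnon} drops and one is left with the reduced Lagrangian
\[
  \tilde\ell_{\operatorname{QC}}(S,\rho_c,\mathcal{X},\phi)
  =\int\rho_c\Big(\partial_t S+(\nabla S-\mathcal{A})\cdot\mathcal{X}+\langle\phi,\widehat H_e\phi\rangle+\frac12\|p\|_{g^{-1}}^2\Big)\,\de r\,\de p,
\]
whose critical-point conditions I would read off field by field. The arbitrary variation $\delta S$, after integrating by parts in $t$ and in $z=(r,p)$, gives the bare transport equation $\partial_t\rho_c+\operatorname{div}(\rho_c\mathcal{X})=0$. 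The arbitrary variation $\delta\rho_c$ (written $\delta D$ in the statement) is algebraic and yields the pointwise relation $\partial_t S+(\nabla S-\mathcal{A})\cdot\mathcal{X}+\langle\phi,\widehat H_e\phi\rangle+\frac12\|p\|_{g^{-1}}^2=0$. Exactly as at the end of the proof of Proposition~\ref{fact:BOMD1}, the constrained variation $\delta\phi=\gamma\phi$ with arbitrary skew-Hermitian $\gamma$ produces $[\phi\phi^\dagger,\widehat H_e]=0$; applying this to $\phi$ and using $\phi^\dagger\phi=1$ gives the eigenvalue problem iii) with $E=\langle\phi|\widehat H_e|\phi\rangle$, whence $\langle\phi,\widehat H_e\phi\rangle=E(q)$ and the algebraic relation becomes $\partial_t S+(\nabla S-\mathcal{A})\cdot\mathcal{X}+H_{cl}(z)=0$.

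The remaining and, I expect, most delicate ingredient is the Euler-Poincar\'{e} variation $\delta\mathcal{X}$ of \eqref{eq:varX}, which has to pin down $\mathcal{X}=X_{H_{cl}}$. Since $\mathcal{X}$ enters $\tilde\ell_{\operatorname{QC}}$ only through $\int\rho_c(\nabla S-\mathcal{A})\cdot\mathcal{X}$, introducing the momentum density $\mathfrak{m}:=\rho_c(\nabla S-\mathcal{A})$ and integrating $\int\mathfrak{m}\cdot(\partial_t\mathcal{Y}+\mathcal{X}\cdot\nabla\mathcal{Y}-\mathcal{Y}\cdot\nabla\mathcal{X})$ by parts in $t$ and $z$ gives the Lie-transport law $\partial_t\mathfrak{m}+\mathcal{L}_{\mathcal{X}}\mathfrak{m}=0$. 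Substituting $\mathfrak{m}=\rho_c(\nabla S-\mathcal{A})$ and subtracting the transport equation for $\rho_c$ found above collapses this, wherever $\rho_c>0$, to $\partial_t(\nabla S-\mathcal{A})+\mathcal{L}_{\mathcal{X}}(\nabla S-\mathcal{A})=0$. Using $\partial_t\mathcal{A}=0$ and Cartan's identity $\mathcal{L}_{\mathcal{X}}\beta=\de(\iota_{\mathcal{X}}\beta)+\iota_{\mathcal{X}}\de\beta$, the exact part $\de S$ contributes $\de(\mathcal{X}\cdot\nabla S)$, so collecting all gradient terms and inserting the algebraic relation $\partial_t S+\mathcal{X}\cdot(\nabla S-\mathcal{A})=-H_{cl}$ reduces everything to $\iota_{\mathcal{X}}\de\mathcal{A}=-\de H_{cl}$. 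As $\de\mathcal{A}=\de p\wedge\de r$ is minus the canonical symplectic form $\omega$, this reads $\iota_{\mathcal{X}}\omega=\de H_{cl}$, the defining relation of the Hamiltonian vector field, so $\mathcal{X}=X_{H_{cl}}=(\partial_p H_{cl},-\partial_r H_{cl})$.

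With $\mathcal{X}=X_{H_{cl}}$ the conclusion follows at once: the bare transport equation becomes $\partial_t\rho_c+\operatorname{div}(\rho_c X_{H_{cl}})=0$, which is i); substituting into the algebraic relation and using $\mathcal{A}\cdot X_{H_{cl}}=p\cdot\partial_p H_{cl}$ and $\nabla S\cdot X_{H_{cl}}=\{S,H_{cl}\}$ turns $\partial_t S+(\nabla S-\mathcal{A})\cdot X_{H_{cl}}+H_{cl}=0$ into $\partial_t S+\{S,H_{cl}\}=p\,\partial_p H_{cl}-H_{cl}$, which is ii); and iii) was already obtained. The only genuine subtlety, as flagged, is the bookkeeping of the $\delta\mathcal{X}$ step---in particular using the continuity equation to pass from the transport of the density $\mathfrak{m}$ to that of the one-form $\nabla S-\mathcal{A}$, and recognising the resulting identity as the symplectic characterisation of $X_{H_{cl}}$; every other step is a routine manipulation of the same kind as in Proposition~\ref{fact:BOMD1}.
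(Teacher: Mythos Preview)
Your proposal is correct and follows essentially the same route as the paper's proof: vary $S$, $\rho_c$, and $\phi$ to obtain the bare continuity equation, the algebraic relation, and the eigenvalue problem, then use the Euler--Poincar\'e variation in $\mathcal{X}$ to derive the Lie-transport of the one-form $\nabla S-\mathcal{A}$ and combine it with the algebraic relation to identify $\mathcal{X}=X_{H_{cl}}$. The only cosmetic differences are that you phrase the last step via Cartan's formula and $\iota_{\mathcal{X}}\omega=\de H_{cl}$ where the paper writes the same computation in coordinates as $J\mathcal{X}=-\nabla H_{cl}$, and that the paper pauses to note explicitly that since $\widehat H_e$ depends only on $q$ the eigenvalue $E$ is independent of $p$, a point you take for granted when writing $E(q)$.
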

We observe that the phase $S$ decouples completely and the classical Liouville equation is solved by $\rho_c(r,p,t)=\delta(r-R(t))\delta(p-P(t))$, thereby recovering the phase-space formulation of the BOMD equations \eqref{Newton-eq}.

Similarly to the construction of the bohmion code, one can formulate a trajectory-based closure for the MQC model.
In particular, we refer to \cite{Tronci:2023} for details on the \emph{koopmon method}.
In short, a regularization process similar to the one in Section~\ref{sec:QHD+BM} leads to the koopmon Lagrangian 
\begin{align*}
	L(r,p,\rho)
	=\sum_{a}w_a\bigg(p_a\dot q_a+\bigg\langle\rho_a,i\sqrt{\mu}\xi_a-\widehat H(q_a,p_q)-\frac{i\sqrt{\mu}}2\sum_bw_b\left[\rho_b,\mathcal{I}_{ab}\right]\bigg\rangle\bigg),
\end{align*}
where $\mathcal{I}_{ab}:=\int{K_a\{K_b,\widehat H\}}/({\sum_cw_cK_c})\,\de r\de p$ and $K_s(r,p):=K(r-q_s,p-p_s)$.
Performing the limit $\mu\to0$ gives us $\tilde L=\sum_aw_a(p_a\dot q_a-\langle\rho_a,\widehat H(q_a,p_a)\rangle)$, which returns the phase-space form of the bohmion implementation \eqref{eq:bohmions} of BOMD.

% ---- Bibliography ----

\newpage
\begin{center}
	\itshape\bfseries\Large -- Supplementary Material --
\end{center}

\appendix
%%%%%%%%%%%%%%%%%%%%%%%%%%%%%%%%%%%%%%%%%%%%%%%%%%
%%%%%%%%%%%%%%%%%%%%%%%%%%%%%%%%%%%%%%%%%%%%%%%%%%
%%%%%%%%%%%%%%%%%%%%%%%%%%%%%%%%%%%%%%%%%%%%%%%%%%
\section{Non-dimensionalization of the XF Lagrangian}
To show how the equations of BOMD are recovered in the classical limit ${\mu\to 0}$, we combined the non-dimensionalization of the XF Lagrangian with the Madelung transform of the dimensionless nuclear factor.
The following describes in more detail how these steps lead to the Lagrangian
\begin{multline}\label{eq:LXF1}
	L_{XF}(S,D,\phi,\partial_t\phi)
	=\int_{\R^{3N}}\!\Big(\big\langle\phi,i\sqrt{\mu}\partial_t\phi-\widehat H_e\phi\big\rangle-\partial_t S\\
	-\frac{\mu}{2D}\|\nabla\sqrt{D}\|_{g^{-1}}^2-\frac{1}{2}\|\nabla S+\mathcal{A}_B\|_{g^{-1}}^2-\epsilon(\phi)\Big)D\mathrm{d}r,
\end{multline}
which was presented in Section~\ref{sec:Exact factorization and variational asymptotics}.
First, we note that the substitution of the exact factorization ansatz $\Psi(r,x,t)
	=\Omega(r,t)\phi(x,t;r)$ for the molecular wave function into the Dirac-Frenkel Lagrangian $\int_{\R^{3N}}\langle\Psi,i\hbar\partial_t\Psi-\widehat H_{mol}\Psi\rangle\,\mathrm{d}r$ yields
\begin{align*}
	L_{XF}
	&=\operatorname{Re}\int_{\R^{3N}}\Bigg(i\hbar\Omega^*\partial_t\Omega+|\Omega|^2\left\langle\phi\mid i\hbar\partial_t\phi-\hat H_e\phi\right\rangle\\
	&\qquad-\frac{\hbar^2}{2}\Big[\|\nabla\Omega\|_{G^{-1}}^2+|\Omega|^2\|\nabla\phi\|_{g^{-1}}^2-2\Omega^*\left\langle\phi\mid g^{-1}\left(\nabla\Omega,\nabla\phi\right)\right\rangle\Big]\Bigg)\,\mathrm{d}r.
\end{align*}
Next, in order to remove all physical dimensions in this Lagrangian, we express physical variables in our new unit system.
For example, the physical dimension of the nuclear wave function $\Omega(r,t)$ is removed by introducing a new wave function $\Omega'(r',t')$ that depends only on the dimensionless variables $r'$ and $t'$.
Note that we introduced the prime symbol to indicate that a given variable is dimensionless.
The relation between $\Omega'$ and $\Omega$ is then $\Omega'(r',t')=\Omega(r'\lambda_0,t't_0)\lambda_0^{3N/2}$, where we used that the unit of $\Omega$ is $\lambda_0^{-3N/2}$.
From this relation it follows that the first term of the above Lagrangian can be written, as follows:
\begin{align*}
	i\hbar\Omega^*\partial_t\Omega
	=i\sqrt{\mu}a_0\Omega'^*\lambda_0^{-3N/2}\partial_{t'}\Omega'\lambda_0^{-3N/2}t_0^{-1}
	=i\sqrt{\mu}\Omega'^*\partial_{t'}\Omega'\lambda_0^{-3N}E_h.
\end{align*}
Similarly, we rewrite all the other terms.
Performing the change of variables $r=r'\lambda_0$ in the integral, a simple calculation shows that the non-dimensionalized form of the Lagrangian is given by
\begin{align*}
	L_{XF}'
	&=\operatorname{Re}\int_{\R^{3N}}\Bigg(i\sqrt{\mu}\Omega'^*\partial_{t'}\Omega'+|\Omega'|^2\left\langle\phi'\mid i\sqrt{\mu}\partial_{t'}\phi'-\hat H_e'\phi'\right\rangle\\
	&-\frac{\mu}{2}\Big[\|\nabla'\Omega'\|_{g'^{-1}}^2+|\Omega'|^2\|\nabla'\phi'\|_{g'^{-1}}^2-2\Omega'^*\left\langle\phi'\mid (g^{-1})'\left(\nabla'\Omega',\nabla'\phi'\right)\right\rangle\Big]\Bigg)\,\mathrm{d}r',
\end{align*}
where $L_{XF}'=L_{XF}/E_h$.
Since now all quantities are suitably non-dimensionalized, it is convenient to omit the prime symbol in the further calculations.
Applying the Madelung transform $\Omega=\sqrt{D}e^{iS/\sqrt{\mu}}$ for the dimensionless nuclear wave function and taking advantage of the fact that all purely imaginary terms can be removed by the real part operation, one finally arrives at
\begin{align*}
	L_{XF}
	&=-\int_{\R^{3N}}\Bigg(D\partial_t S-D\left\langle\phi,i\sqrt{\mu}\partial_t\phi-\hat H_e\phi\right\rangle\\
	&\qquad+\frac{\mu}{2}\|\nabla\sqrt{D}\|^2_{g^{-1}}+\frac{D}{2}\|\nabla S\|^2_{g^{-1}}+\frac{D\mu}{2}\|\nabla\phi\|_{g^{-1}}^2+Dg^{-1}(\nabla S,\mathcal{A}_B)\Bigg)\,\mathrm{d}r\\
	&=\int_{\R^{3N}}\!\Big(\big\langle\phi,i\sqrt{\mu}\partial_t\phi-\widehat H_e\phi\big\rangle-\partial_t S\\
	&\qquad-\frac{\mu}{2D}\|\nabla\sqrt{D}\|_{g^{-1}}^2-\frac{1}{2}\|\nabla S+\mathcal{A}_B\|_{g^{-1}}^2-\epsilon(\phi)\Big)D\mathrm{d}r,
\end{align*}
which is exactly the Lagrangian in \eqref{eq:LXF1}.

%%%%%%%%%%%%%%%%%%%%%%%%%%%%%%%%%%%%%%%%%%%%%%%%%%
%%%%%%%%%%%%%%%%%%%%%%%%%%%%%%%%%%%%%%%%%%%%%%%%%%
%%%%%%%%%%%%%%%%%%%%%%%%%%%%%%%%%%%%%%%%%%%%%%%%%%
\section{Proof of proposition 2}
In the first step, we derive the electronic equation by following the same arguments as presented in the proof of Proposition~\ref{fact:BOMD1}:
Using that the $\mathcal{O}(\sqrt{\mu})$ term in $\ell_{QC}$ contains the summand $\sqrt{\mu}D\langle\phi,i\dot\phi\rangle$, we express the time derivative $\dot\phi$ in terms of $\xi:=\dot UU^{-1}$ as $\dot\phi=\xi\phi$.
Since in the limit $\mu\to 0$ the variations of the Euler-Poincar\'{e} variational principle still contain the condition $\delta\phi=\eta\phi$, where $\eta=(\delta U)U^{-1}$ is arbitrary, we then obtain the commutator relation $[\phi\phi^\dagger,\widehat{H}_e]=0$.
Note that here $\phi$ depends parametrically on ${z}=(q,p)$, while $\widehat H_e$ depends only on the nuclear position $q$.
Since the commutator relation must hold for all ${z}$, it therefore follows that the solutions can be written independently of the nuclear momentum.
More precisely, each solution $\phi(x;z)$ can be identified as a function $\tilde\phi\colon\R^{3N}\to L^2(\R^{3L})$ that depends only on $q$, that is, $\phi(x;z)=\tilde\phi(x;q)$ for all $p\in\R^{3N}$.
Hence, upon applying both sides of $[\phi\phi^\dagger,\widehat{H}_e]=0$ to $\phi$, results in the familiar eigenvalue problem iii).
In the second step, we derive the nuclear equations.
Variations in $\delta S$ yield the continuity equation $\partial_tD+\operatorname{div}(D\mathcal{X})=0$, while variations in $\delta D$ yield
\begin{align}\label{eq:dotS1}
	\partial_tS+(\nabla S-\mathcal{A})\cdot\mathcal{X}+\frac{1}{2}\|p\|_{g^{-1}}^2+E(q)
	=0.
\end{align}
Using that fact that $\delta\mathcal{X}=\partial_t\mathcal{Y}+\mathcal{X}\cdot\nabla\mathcal{Y}-\mathcal{Y}\cdot\nabla\mathcal{X}$, we conclude that $\partial_t(\nabla S-\mathcal{A})+\mathcal{X}\cdot\nabla(\nabla S-\mathcal{A})+\nabla\mathcal{X}\cdot(\nabla S-\mathcal{A})
	=0$.
In the next step we determine the unknown vector field $\mathcal{X}$.
Using $\partial_t(\nabla S)=\nabla(\partial_t S)$ and inserting the expression for $\partial_tS$ into \eqref{eq:dotS1}, a short calculation shows that
\begin{align*}
	\mathcal{X}\cdot\nabla\mathcal{A}+\nabla\mathcal{X}\cdot\mathcal{A}-\nabla(\mathcal{A}\cdot\mathcal{X})
	=-\left(\nabla_q E(q),G^{-1}p\right).
\end{align*}
Since $\mathcal{X}\cdot\nabla\mathcal{A}+\nabla\mathcal{X}\cdot\mathcal{A}-\nabla(\mathcal{A}\cdot\mathcal{X})=\left((\nabla\mathcal{A})^T-\nabla\mathcal{A}\right)\mathcal{X}=J\mathcal{X}$, where $J=(\nabla\mathcal{A})^T-\nabla\mathcal{A}$ is the canonical symplectic form, we conclude that $\mathcal{X}={X}_{ H_{cl}}$.
Consequently, inserting this expression for the transport vector field $\mathcal{X}$ into \eqref{eq:dotS1}, we obtain the Koopman-van Hove (KvH) equation ii) for the phase $S$.\\[-5mm]

$\hfill\blacksquare$

\bigskip\noindent
Combining the transport equation i) with the phase equation ii) for the phase $S$, we obtain the following \emph{Koopman-van Hove} equation for the nuclear dynamics:
\begin{align*}
	i\sqrt{\mu}\Omega
	=i\sqrt{\mu}\{H_{cl},\Omega\}-(p\partial_p H_{cl}-H_{cl}).
\end{align*}

\end{document}